\newtheorem{theorem}{Theorem}
 \newtheorem{lemma}{Lemma}
 \newtheorem{definition}{Definition}
\begin{document}
%
\title{A H-ARQ scheme for polar codes}
\author{\IEEEauthorblockN{Saurabha R. Tavildar}
\IEEEauthorblockA{Email: tavildar at gmail }
}

\maketitle
\begin{abstract}
~We consider the problem of supporting H-ARQ with polar codes. For supporting H-ARQ, we propose to create redundancy versions based on different, but equivalent, subsets of a polar code. The equivalent subsets are created from an initial subset of a polar code using an inherent symmetry in polar code construction. A greedy construction is used to create the initial subset of a polar code. 

	We demonstrate performance of proposed constructions via simulations for binary input AWGN channel. We demonstrate that a (4096, 1024) polar code can be divided into two disjoint (2048, 1024) subset polar codes, which when decoded individually are within 0.2 dB (at $1 \%$ BLER) of a (2048, 1024) polar code, and achieve performance of a (4096, 1024) polar code when decoded jointly. 
\end{abstract}

\section{Introduction}
Polar codes, introduced in \cite{Arikan}, were proved to achieve the symmetric capacity for BDMCs. The original construction in \cite{Arikan} is defined for block length values that are a power of $2$. In this paper, we consider ``subset polar codes" which are constructed by puncturing a subset of coded bits from a polar code. For construction of subset polar codes, we start with a low rate polar code, and greedily puncture output bits (similar to \cite{Khamy}) to create a code of higher rate {\it without  re-optimizing} the set of information bits. For supporting H-ARQ, it is important that the set of information bits for subset polar code is same as the original polar code. This is because re-optimizing the set of information bits changes the code structure which makes it difficult for the receiver to jointly decode multiple transmissions. 

We use the notation $(N \succeq M, K)$ to denote a subset polar code of block length $M$ constructed from an $(N, K)$ polar code by puncturing $N-M$ coded bits. We call the $(N, K)$ polar code as a mother polar code. In this paper, we propose a construction for subset polar codes, and its use for H-ARQ. The main simulation results regarding subset polar code constructions and H-ARQ are in \hyperref[fig:harq_performance]{Figure~\ref*{fig:harq_performance}} which show:
\begin{itemize}
\item Two $(4096 \succeq 2048, 1024)$ subset polar codes, decoded \textit{individually}, perform within $0.2$ dB (at $1 \%$ BLER) of a $(2048, 1024)$ polar code.
\item Two $(4096 \succeq 2048, 1024)$ subset polar codes, decoded \textit{jointly}, perform $0.75$ dB better (at $1 \%$ BLER) than a $(2048, 1024)$ polar code and achieve, by construction, performance of the $(4096, 1024)$ mother polar code that they were constructed from. 
\end{itemize}
Throughout this paper, the following assumptions are used: (i) code construction in \cite{Arikan} based on evaluation of Bhattacharya bounds is used. Under this construction, optimized value (for $1 \%$ BLER) of $
\epsilon$ is used for (mother) polar code construction which corresponds to $\epsilon = 0.64$ for the $(4096, 1024)$ polar code and $\epsilon = 0.32$ for the $(2048, 1024)$ polar code; (ii) for decoding, simplified LLR based, CRC-aided (16 bit), list decoding ($L = 32$) algorithm is used \cite{Tal, Stimming}. \footnote{Performance of the $(2048, 1024)$ polar code is 0.1 dB worse than reported in \cite{Tal} possibly due to sub-optimal code construction and/or LLR based receiver which uses the ‘hardware-friendly’ (see \cite{Stimming}) update equations for large values of LLR. C and MATLAB implementations are provided in \cite{Tavildar}.}
\begin{figure}[!tb] 
\centering
\includegraphics[scale=.42]{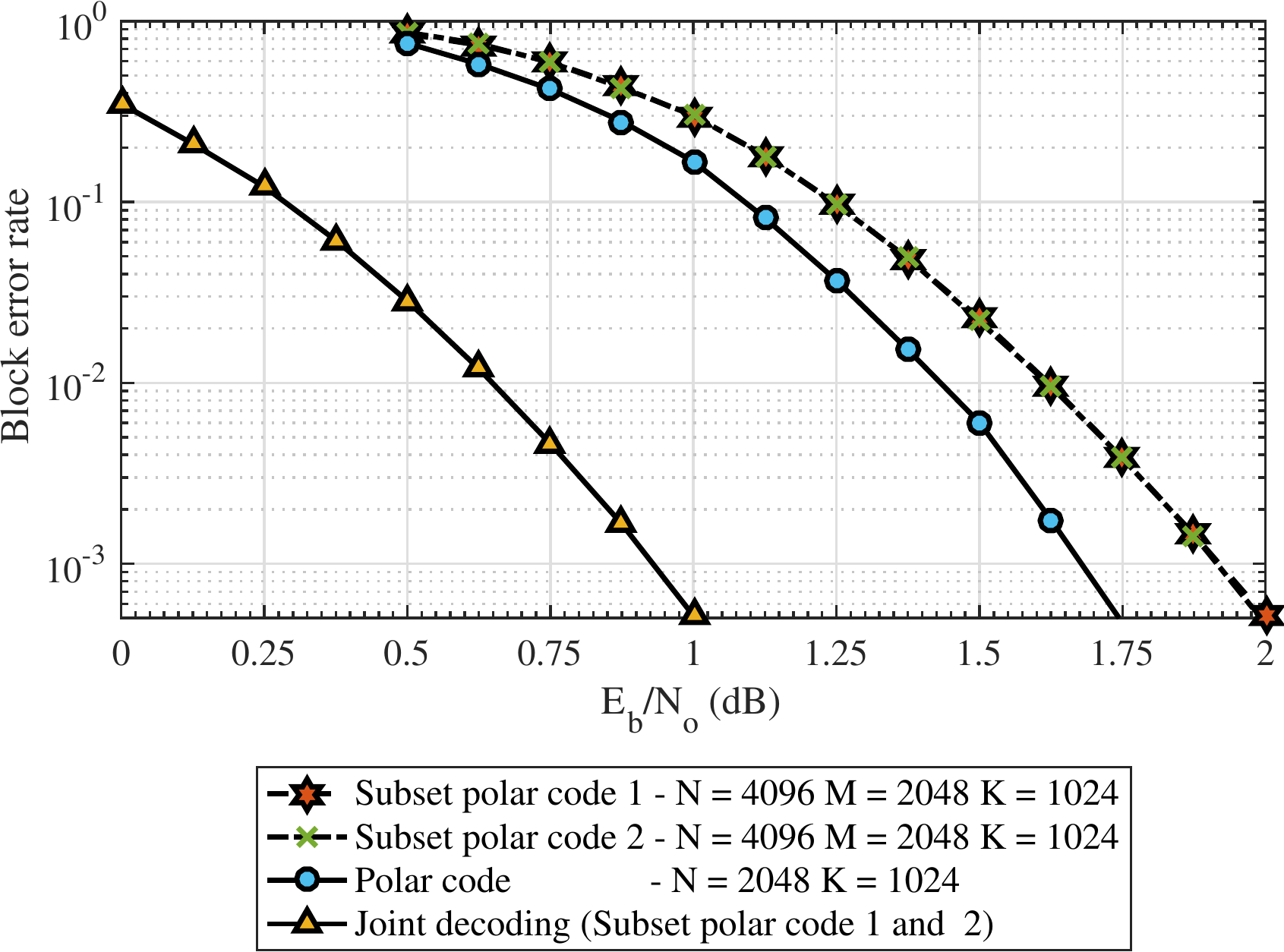}
\caption{Subset polar codes performance.}\label{fig:harq_performance}
\end{figure}

The rest of the paper is organized as follows: notation is discussed in \hyperref[sec:notation]{Section~\ref*{sec:notation}}. \hyperref[sec:code_construction]{Section~\ref*{sec:code_construction}} discusses subset polar code construction. In \hyperref[sec:harq]{Section~\ref*{sec:harq}}, we provide a H-ARQ extension for subset polar codes. In \hyperref[sec:prior_work]{Section~\ref*{sec:prior_work}}, we discuss relation with prior work, and finally in \hyperref[sec:equivalent]{Section~\ref*{sec:equivalent}}, we prove some results regarding subset polar codes that justify constructions proposed in this paper.

\section{Notation}\label{sec:notation}
\begin{itemize}
\item $N$: block length of mother polar code ($N = 2^n$); 
\item $M$: block length of a subset polar code ($M \leq N$); 
\item $K$: number of information bits; 
\item $P = \{p_1,~p_2, ... ,~p_{N-M}\}, 1 \leq p_i \leq N~\forall~i$ denotes a puncturing pattern of distinct indexes that specifies a subset polar code by removing coded bits with indexes $p_i$ from a $(N, K)$ mother polar code;
\item $S = \{s_1,~s_2, ... ,~s_{M}\}, 1 \leq s_i \leq N~\forall~i$ denotes a subset polar code by using coded bits with distinct indexes $s_i$ from a $(N, K)$ mother polar code; 
\item $x_nx_{n-1}...x_1$ denotes binary representation of $x-1$ where $1 \leq x \leq N$;
\item Let $1 \leq x,~y,~z \leq N$ be integers, define $z = x~\underline{\oplus}~y$ if $z_i = x_i\oplus y_i~\forall~i,~1\leq i \leq n$ where $x_i,~y_i,~z_i$ are binary representations of $x-1,~y-1,~z-1$ respectively, and $\oplus$ denotes the binary XOR operation;
\item $\epsilon$ is the BEC parameter used for code construction.
\end{itemize}

\section{Subset polar code construction}\label{sec:code_construction}

For construction of subset polar codes, we start with a low rate mother polar code. Polar code construction in \cite{Arikan} based on evaluation of Bhattacharya parameter bounds is used. The set of information and CRC bits is optimized for $\epsilon$ parameter that gives the best performance for the mother polar code. A subset polar code is constructed by puncturing coded bits from the mother polar code. The puncturing algorithm is a greedy algorithm that selects a coded bit to puncture at each step by estimating the BLER after puncturing that coded bit in addition to already punctured coded bits. The algorithm additionally adaptively changes the design $\epsilon$ to keep the estimated BLER at a given target BLER. The proposed algorithm is similar to the PPA algorithm in \cite{Khamy} with the exception of update of the design $\epsilon$. The update of the design $\epsilon$ appears to improve performance by keeping the union bound tight for estimating BLER (see results in \hyperref[sec:prior_work]{Section~\ref*{sec:prior_work}}). The above description is provided in the form of a pseudo-code in \hyperref[alg:puncture]{Algorithm \ref*{alg:puncture}} and \hyperref[alg:eps]{Algorithm \ref*{alg:eps}}. The algorithm for estimating BLER is omitted. BLER is estimated the union bound, which uses the sum of Bhattacharya bounds for the polarized channels corresponding to information and CRC bits of the mother polar code.

\begin{figure}[!hb]
\centering
\includegraphics[scale=.42]{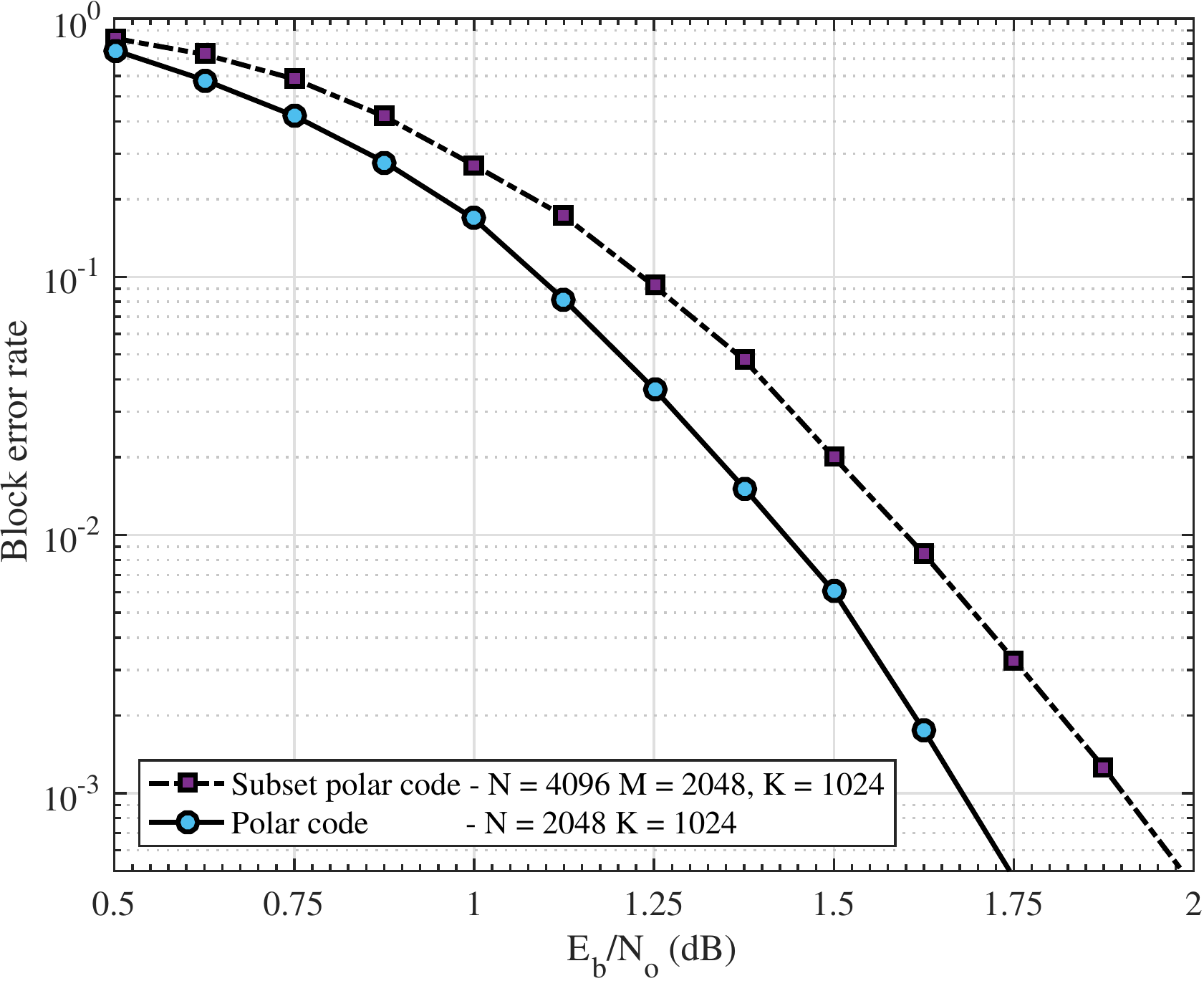}
\caption{Subset polar code vs polar code}\label{fig:polar_performance}
\end{figure}

The performance of the scheme was evaluated for codes with $K = 1024$ by starting with a $(4096, 1024)$ code and constructing subset polar codes with smaller block length. \hyperref[fig:polar_performance]{Figure~\ref*{fig:polar_performance}} shows the comparison of  subset polar code and a code designed directly for $N = 2048$. We see that the subset code is less than $0.2$ dB away at $10^{-2}$ block error rate. We that some performance loss is expected since the set of information bits for the subset polar code is optimized for the mother polar code ($N = 4096$). As we will see in \hyperref[sec:harq]{Section~\ref*{sec:harq}} that this loss can be recovered when considered in context of H-ARQ. 

\begin{algorithm}
\caption{Adaptive greedy construction}\label{alg:puncture}
\begin{algorithmic}[1]
\Procedure{Puncture}{$M$}\label{proc:puncture}\Comment{M = length of subset code}
\State $P \gets \varnothing$
\State $\epsilon \gets \text{Design epsilon}$
\State $e \gets \textsc{Evaluate Bler}(P, \epsilon) $
\For{$ i = 1 : N - M$}
\State $E \gets N * \text{ones}(N)$ 
\For{$ l = 1 : N$}
\If {$ l ~ \text{in} ~ P$}
\State \texttt{continue}
\EndIf
\State $E[l] = \textsc{Estimate Bler}(P \cup \{l\}, \epsilon)$
\EndFor
\State $P = P \cup \text{argmin}(E)$
\State $\epsilon = \textsc{Update Eps}(P, \epsilon, e)$
\EndFor
\State \textbf{return} $P$
\EndProcedure
\end{algorithmic}
\end{algorithm}

\begin{algorithm}
\caption{Epsilon update}\label{alg:eps}
\begin{algorithmic}[1]
\Procedure{UpdateEps}{$P, \epsilon, e$} 
\State $\epsilon_i = 0.001$ \Comment{A small number}
\State $e_o \gets \textsc{Estimate Bler}(P, \epsilon)$
\While{$e_o > e$} 
	\State $e_o \gets \textsc{Estimate Bler}(P, \epsilon = \epsilon - \epsilon_i)$
	\EndWhile
\State \textbf{return} $\epsilon$ 
\EndProcedure
\end{algorithmic}
\end{algorithm}

\section{Polar H-ARQ design and performance}\label{sec:harq}

\subsection{H-ARQ design principles}

We study the problem of supporting H-ARQ with polar codes (called polar H-ARQ) motivated by wireless system design. As background, note that in LTE (see e.g. \cite{212}), H-ARQ is supported for Turbo codes by arranging the coded bits in a circular buffer, and redundancy versions (RV) are specified by an offset within the circular buffer. Here, we look at the problem of constructing RVs for polar codes.  

We consider the following two to be desirable principles of RV design for a wireless system:
\begin{enumerate}
\item{\textit{Individually} decoding each RV has good performance}
\item{\textit{~~Jointly~~} decoding multiple RVs has good performance}
\end{enumerate}

The first principle is motivated by the fact that in a wireless system, different transmissions may experience different channel fades, and hence the performance at the receiver may be dominated by a (re)-transmission that experiences a good channel fade. This is especially important in case of limited feedback system (e.g. only ACK/NACK feedback from the receiver rather than complete channel state information). This motivates the principle that each RV has good decoding performance when individually decoded. The second principle is targeting coding gain from multiple transmissions.

One approach for polar H-ARQ is to take ordering given by a subset polar code, and write it in a circular buffer similar to LTE. However, it is unclear what offsets, if any, provide good performance for other RVs. For example, it is unclear if complement of a subset polar code is a good subset polar code. Simulation results suggest that it is not a good subset polar code. Hence, we propose an alternate way to generate redundancy versions which is discussed next. 

\subsection{Equivalent subset polar codes}

We define a notion of equivalent subset polar codes as:
\begin{definition}\label{def1}
Let $S = \{ s_1, s_2, ... , s_M\}$ and $T = \{ t_1, t_2, ... , t_M\}$  be two $(N \succeq M, K)$ subset polar codes. We say $S$ is equivalent to $T$ if $s_i = t_i~\underline{\oplus}~x~\forall~i$, $1 \leq i \leq M$ for some integer $x$, where $1 \leq x \leq N$.
\end{definition}

The operation $\underline{\oplus}$ is defined for integers in \hyperref[sec:notation]{Section~\ref*{sec:notation}}. With some abuse of notation, we use $S = T~\underline{\oplus}~x$ to denote this relation between subsets, puncturing patterns or subset codes $S$ and $T$. Next, we define equivalence of channels: 
\begin{definition}
Let $W_1$ and $W_2$ be two binary input channels with output alphabets $Y_1$ and $Y_2$ respectively. We define the two channels are equivalent, $W_1 \sim W_2$, if there exists an invertible function $f: Y_1 \rightarrow Y_2$ such that $W_1(y_1|x) = W_2(f(y_1)|x)$.
\end{definition}

We now relate the two notions of equivalence through the following theorem. Let $S$ be a $(N \succeq M, K)$ subset polar code and $W$ be a symmetric B-DMC. Let $W_{N,S}^{(i)}$ denote polarized bit-channels corresponding to the polarization transform for $M$ i.i.d.\ realizations of channel $W$ for indexes in $S$, and $N-M$ realizations of an erasure channel for indexes not in $S$ (defined formally in \hyperref[sec:equivalent]{Section~\ref*{sec:equivalent}}).  

\begin{theorem}\label{thm1}
If $S, T$ are equivalent subset polar codes, then channels $W_{N,S}^{(i)}$ and $W_{N,T}^{(i)}$ are equivalent for each $1 \leq i \leq N$.
\end{theorem}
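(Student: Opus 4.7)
The plan is to induct on $n = \log_2 N$, exploiting the recursive structure of the polar transform. The base case ($n=0$, and $n=1$ by direct enumeration) is easy. For the inductive step, the length-$N$ polar transform decomposes as Arikan's $\pm$ combination applied to two length-$N/2$ sub-polarizations, where the two sub-codes inherit their channel assignments from splitting $S$ into its restrictions to the first and second halves of the output indices. Writing $x = x_n x_{n-1}\cdots x_1$ and $x' = x_{n-1}\cdots x_1$, the set $T = S\,\underline{\oplus}\,x$ decomposes compatibly: if $x_n = 0$, the two halves of $T$ are the halves of $S$ each $\underline{\oplus}$-shifted by $x'$; if $x_n = 1$, the halves additionally swap.

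For $x_n = 0$, the inductive hypothesis directly gives equivalence of the sub-code bit-channels on each half. One then observes that the $\pm$ combination propagates channel equivalence: if $A \sim A'$ via $f_A$ and $B \sim B'$ via $f_B$, then $(A,B)^{\pm} \sim (A',B')^{\pm}$ via the bijection induced componentwise on the combined output space. This handles the first case.

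For $x_n = 1$, the inductive hypothesis reduces the claim to $(A,B)^{\pm} \sim (B,A)^{\pm}$ for arbitrary sub-code bit-channels $A, B$. The $-$ case is immediate by swapping the two observations. The $+$ case is the main obstacle: $(A,B)^+(y_1,y_2,u_1 \mid u_2) = \tfrac{1}{2}\,A(y_1 \mid u_1 \oplus u_2)\,B(y_2 \mid u_2)$ is genuinely asymmetric in its two arguments. The resolution is to exploit the symmetry of $W$ (and of the erasure channel), which by a parallel inductive invariant is inherited by every sub-code bit-channel. Given input-output symmetries $\pi_A,\pi_B$ of $A,B$, the explicit bijection $f(y_1,y_2,u_1) = (\pi_B^{u_1}(y_2),\,\pi_A^{u_1}(y_1),\,u_1)$ establishes $(A,B)^+ \sim (B,A)^+$ by using the symmetries to transfer the $u_1$-dependence from the first likelihood factor to the second. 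A routine check shows that both $+$ and $-$ preserve the input-output symmetry of their arguments, so the parallel invariant survives, and the induction closes.
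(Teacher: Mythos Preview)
Your proposal is correct and follows essentially the same route as the paper: induction on $n=\log_2 N$, splitting $S$ and $T$ into their first- and second-half restrictions, casing on the top bit of $x$, and invoking the two lemmas that (i) the single-step polarization $(A,B)\rightarrow(W',W'')$ propagates equivalence of inputs to equivalence of outputs and (ii) $(A,B)$ and $(B,A)$ polarize to equivalent channels, the latter via the bijection $(y_1,y_2,u_1)\mapsto(\pi_B^{u_1}(y_2),\pi_A^{u_1}(y_1),u_1)$. Your explicit remark that the symmetry of the sub-code bit-channels must be carried as a parallel inductive invariant (so that the swap lemma continues to apply at every level) is a point the paper leaves implicit.
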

Proof is given in \hyperref[sec:equivalent]{Section~\ref*{sec:equivalent}}. We note that the notion of equivalent subset polar codes is similar to the equivalent shorterning patterns discussed in \cite{Vera} to reduce complexity of subset code construction. Equivalent subsets will be considered equivalent patterns as per definition in \cite{Vera}. We give an explicit construction (as per Definition 1), and a general proof for this construction. Proof in \cite{Vera} is for equality of the error probability estimate via Gaussian approximation for the AWGN channel.

Finally, we propose to support polar H-ARQ by starting with an initial subset polar code, and creating multiple RVs by selecting an appropriate value $x$ for each RV, and using the construction given by \hyperref[def1]{Definition~\ref*{def1}}. However, we discuss a modification of \hyperref[alg:puncture]{Algorithm~\ref*{alg:puncture}} in order to improve performance of polar H-ARQ under this proposal. 

\subsection{Modification to initial construction of subset polar code}

The proposal to create multiple RVs by selecting $x$ satisfies principle 1 but not necessarily principle 2. One reason for this is  that two equivalent subsets $S$ and $T$ may have significant overlap, and hence $S \cup T$ may not be a good code. To solve this problem, we modify the construction of the initial subset polar code to take into account $x$ while designing the initial pattern $P$. In particular, pattern $P$ is defined while making sure that $P$ and $P~\underline{\oplus}~x$ are disjoint as long as $M \geq N/2$. This is a small modification to \hyperref[alg:puncture]{Algorithm~\ref*{alg:puncture}}, and is shown in \hyperref[alg:sym_puncture]{Algorithm~\ref*{alg:sym_puncture}} below as ``Symmetric greedy construction".

\begin{algorithm}
\caption{Symmetric greedy construction}\label{alg:sym_puncture}
\begin{algorithmic}[1]
\Procedure{Puncture}{$M$}\Comment{M = length of subset code}
\State $P \gets \varnothing$
\State $e \gets \textsc{Estimate Bler}(P, \epsilon) $
\For{$ i = 1 : N - M$}  
\State $E \gets N * \text{ones}(N)$

\For{$ l = 1 : N$}
\If {$ l ~ \text{in} ~ P ~ \textcolor{red}{\text{or} ~ l~\text{in} ~ P ~\underline{\oplus}~x}$}
\State \texttt{continue}
\EndIf
\State $E[l] = \textsc{Estimate Bler}(P \cup \{l\}, \epsilon)$
\EndFor
\State $P = P \cup \text{argmin}(E)$
\State $\epsilon = \textsc{Update Eps}(P, \epsilon, e)$
\EndFor
\State \textbf{return} $P$
\EndProcedure
\end{algorithmic}
\end{algorithm}

\hyperref[fig:symmetric]{Figure~\ref*{fig:symmetric}} shows the performance comparison. The symmetric greedy construction curve is denoted by S-subset polar code (\hyperref[alg:sym_puncture]{Algorithm~\ref*{alg:sym_puncture}}) and can be seen to almost overlap with the greedy construction (\hyperref[alg:puncture]{Algorithm~\ref*{alg:puncture}}). The loss of \hyperref[alg:sym_puncture]{Algorithm~\ref*{alg:sym_puncture}} with respect to \hyperref[alg:puncture]{Algorithm~\ref*{alg:puncture}} is less than $0.025$ dB. The gain of \hyperref[alg:sym_puncture]{Algorithm~\ref*{alg:sym_puncture}} for polar H-ARQ is significant as seen in the next section.

\begin{figure}[!htb]
\centering
\includegraphics[scale=.42]{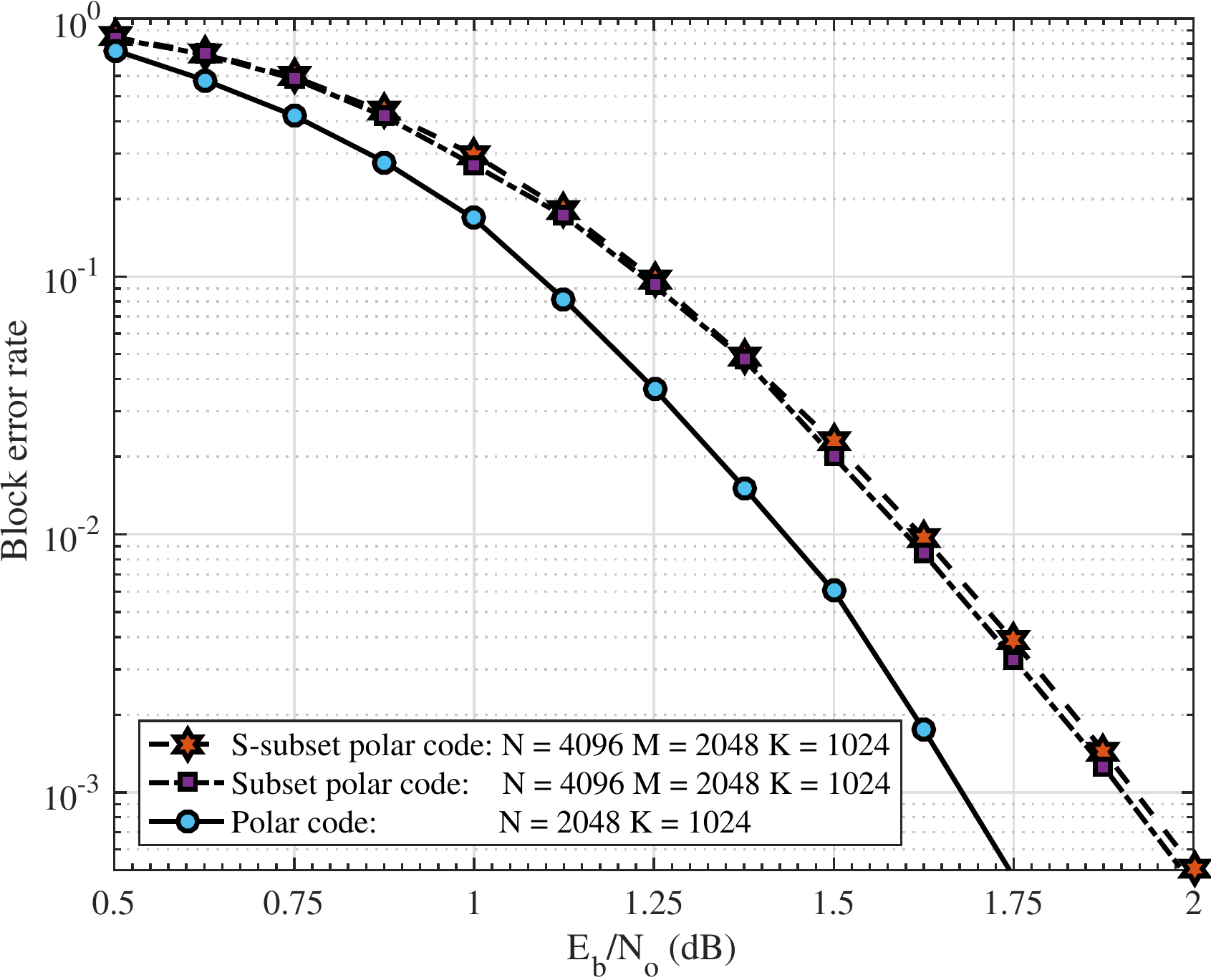}
\caption{Comparison of \hyperref[alg:puncture]{Algorithm~\ref*{alg:puncture}} and  \hyperref[alg:sym_puncture]{Algorithm~\ref*{alg:sym_puncture}}}\label{fig:symmetric}
\end{figure}

\subsection{Simulation results for polar H-ARQ}

Polar H-ARQ is supported by creating a subset polar code, $S$, as per \hyperref[alg:sym_puncture]{Algorithm~\ref*{alg:sym_puncture}}. We use value of $x = 4096, N = 4096, M = 2048$. We get two $(4096 \succeq 2048, 1024)$ subset polar codes, $S$ and $S~\underline{\oplus}~x$, which are used  as the two RVs. We note that more than two RVs can be generated by using this basic construction of equivalent subset codes $S$ and $S~\underline{\oplus}~x$. For example, two additional RVs can be generated as $S~\underline{\oplus}~y$ and $S ~\underline{\oplus}~x ~\underline{\oplus}~y$. The polar H-ARQ performance is demonstrated in \hyperref[fig:harq_performance]{Figure~\ref*{fig:harq_performance}}. It shows (i) two RVs, when decoded individually, have the same performance, and the performance is within $0.2$ dB (at $1 \%$ BLER) of a $(2048, 1024)$ polar code, and (ii) two RVs when decoded jointly, achieve performance of $(4096, 1024)$ mother polar code, which is about $0.75$ dB better than the $(2048, 1024)$ polar code.

\section{Comparison with prior work}\label{sec:prior_work}

The original polar construction in \cite{Arikan} used block length values that are powers of 2. Following \cite{Arikan}, there has been significant work to extend the construction to other block length values. There are two different types of extensions: (i) for example \cite{Korada}  extends polar constructions using an $l \times l$ kernel; (ii) \cite{Khamy}, \cite{Vera}, \cite{Wang} extend polar constructions by puncturing coded bits from an original polar code. Here, we limit the discussion to works related to the second approach, and more specifically to constructions that do not re-optimize the set of information bits. In addition, we discuss work related to H-ARQ with polar codes. 

\subsection{Subset polar codes with fixed information bits}

Proposed \hyperref[alg:puncture]{Algorithm~\ref*{alg:puncture}} is a small variation of the PPA algorithm proposed in \cite{Khamy} - the variation being update of the design $\epsilon$. The algorithm in \cite{Khamy} also uses Gaussian approximation of density evolution for code construction. Here, we use the simplified approach of using bounds on the Bhattacharya parameters which is numerically faster (see results in \cite{Vangala} that suggest that these two approaches have similar performance). In other work, \cite{Zhang} considered similar approach for code construction and proposed multiple algorithms. In particular, algorithm 4 in \cite{Zhang} is based on selecting coded bits that when punctured lead to zero capacity for the lowest capacity bit channels for the mother polar code which are frozen by code construction. A simulation comparison is shown in the \hyperref[fig:comparison]{Figure~\ref*{fig:comparison}} below. The results demonstrate that, for the parameters considered here, the proposed algorithm does about 0.25 and 0.75~dB better than algorithm with fixed $\epsilon$ and algorithm based on frozen bits respectively. One reason for worse performance of fixed $\epsilon$ algorithm is that the union bound is not tight for high code rate when evaluated for high value of $\epsilon$.\footnote{For the fixed $\epsilon$ scheme for subset polar code construction, we also tried using a low value of $\epsilon$ (or equivalently high value of SNR) for code design. This helps to keep the union bound tight, but the starting code itself has a significantly worse performance for the regime of interest ($\sim 1 \%$ BLER).}
\begin{figure}[!htb]
\centering
\includegraphics[scale=.42]{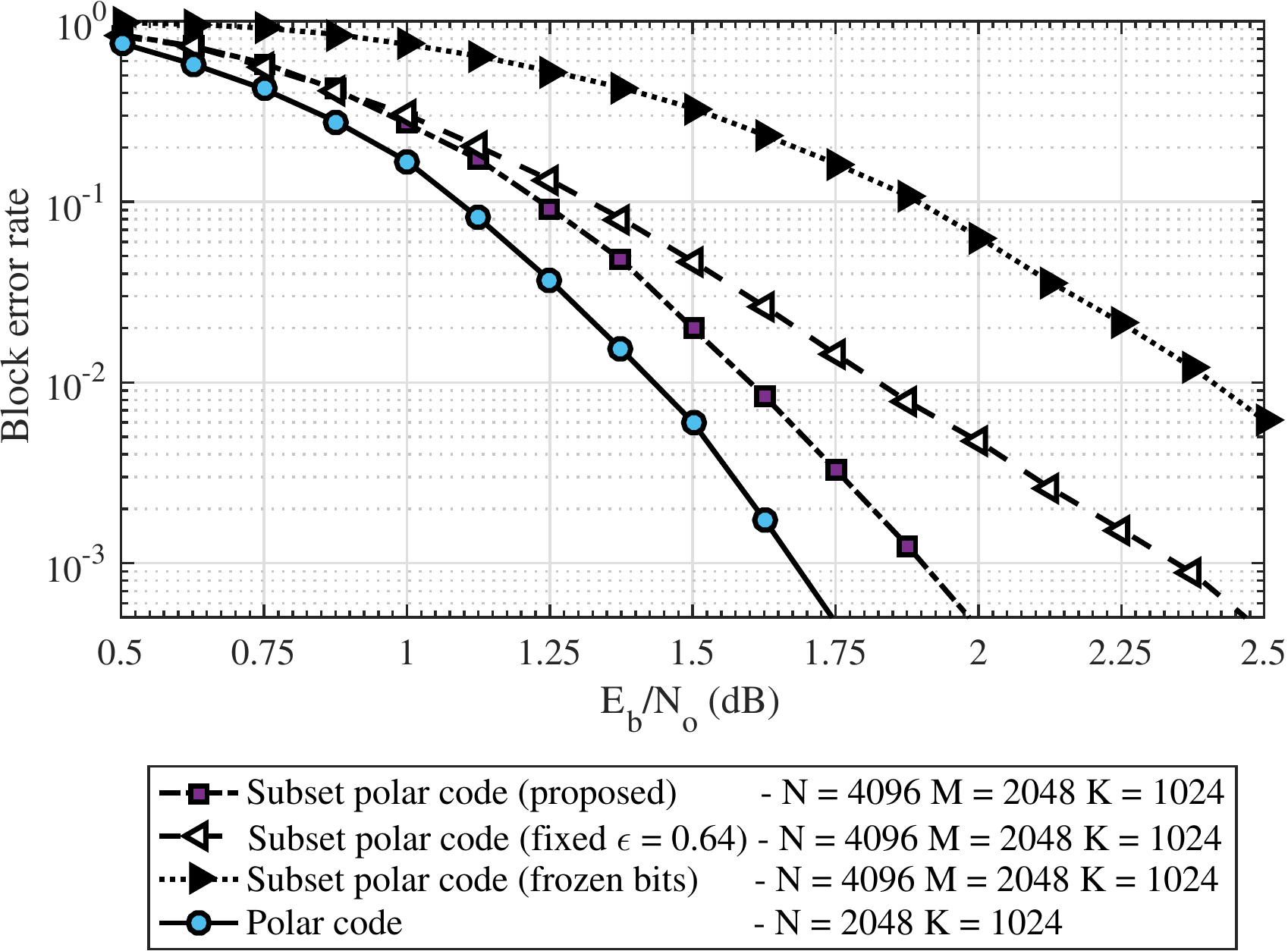}
\caption{Comparison of \hyperref[alg:puncture]{Algorithm~\ref*{alg:puncture}} with other approaches.}\label{fig:comparison}
\end{figure}

\subsection{Polar H-ARQ}

The work in \cite{Khamy} proposes a H-ARQ extension of the PPA algorithm by using the coded bits not transmitted during the first transmission. This proposal will have a similar performance when jointly decoding RVs, but it is unclear if each RV individually will have a good performance. This, for example, is important for a wireless system where the first transmission can experience a deep fade.

In addition to the traditional H-ARQ approach discussed in this paper (and in \cite{Khamy}),  another approach is proposed for polar H-ARQ in \cite{Li}, \cite{Maric}, \cite{Niu}, \cite{Chen}. At a high level, the motivation for these works is to be able to optimize code for each re-transmission by re-selecting the set of information bits. This improves performance of each transmission. However, since each transmission is effectively a different code, it is harder for the receiver to combine multiple transmissions. For example, the puncturing pattern in \cite{Niu} will lead to the $(4096 \succeq 2048, 1024)$ code to have the same performance as $(2048, 1024)$ code by re-optimizing the set of information bits. However, when jointly decoding multiple transmissions, the performance will be worse than performance of the base $(4096, 1024)$ polar code (e.g. see results in \cite{Li}).

\section{Equivalent polar subsets}\label{sec:equivalent}

Here, we prove \hyperref[thm1]{Theorem~\ref*{thm1}}. We start with a generalized notion of polarization, and then prove two lemmas regarding the generalization before proving \hyperref[thm1]{Theorem~\ref*{thm1}}. 
\subsection{Polarization with different distributions}
We generalize the notion of polarization in \cite{Arikan} to define polarization of two channels that are independent but not necessarily identically distributed (same as the definition of ``compound polar channels" in \cite{Hessam}). Let $W_1$ and $W_2$ be two binary input channels with output alphabets $Y_1$ and $Y_2$ respectively. We say a pair of binary input channels $W': X \rightarrow Y_1 \times Y_2$ and $W'': X \rightarrow Y_1 \times Y_2 \times X$ are obtained by single-step polarization transformation $(W_1, W_2)$ if:
\begin{eqnarray*}
W'(y_1, y_2 | u_1) & = & \sum_{u_2'} \frac{1}{2} W_1(y_1|u_1 \oplus u_2') W_2(y_2| u_2') \\
W''(y_1, y_2, u_1 | u_2) & = & \frac{1}{2} W_1(y_1|u_1 \oplus u_2). W_2(y_2| u_2) 
\end{eqnarray*}
We denote the polarization as $(W_1, W_2) \rightarrow (W', W'')$. 

One special case is when one of the channels is ``punctured". We define puncturing by use of an erasure channel, $\textit{E}$, that leads to an erasure with probability $1$. For example, for N = 2, if $S = \{ 1 \}$, the polarization transform for a subset polar code would involve $(W, \textit{E}) \rightarrow (W_{2,S}^{(1)}, W_{2,S}^{(2)})$. That is, the second realization of the channel $W$ is replaced by channel $\textit{E}$. We use this construction recursively to define channels $W_{N,S}^{(i)}$ starting with $N$ independent realizations of channel $W$ but replacing the realizations for indexes not in $S$ with $\textit{E}$.

\subsection{Lemma 1: order independence of polarization}

Given the general definition of polarization, we show that the order of channels does not matter for polarization for symmetric B-DMCs using the following lemma.
\begin{lemma}\label{lem:eq}
Let $W_1$ and $W_2$ be two independent symmetric B-DMCs, and let $(W_1, W_2) \rightarrow (W_1', W_1'')$ and $(W_2, W_1) \rightarrow (W_2', W_2'')$, then $W_1' ~ \sim ~  W_2'$ and $W_1'' ~ \sim ~ W_2'$.
\end{lemma}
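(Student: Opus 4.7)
The plan is to exhibit, in each case, an explicit invertible map $f$ between the output alphabets that realizes $W_1^{(\cdot)}(y|x) = W_2^{(\cdot)}(f(y)|x)$ for every input $x$, which is exactly what Definition~2 requires. I will read the statement ``$W_1'' \sim W_2'$'' as a typo for $W_1'' \sim W_2''$, since otherwise the input/output alphabets are incompatible.

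For the first equivalence $W_1' \sim W_2'$, I would begin by writing out the two defining sums: $W_1'(y_1,y_2|u_1) = \sum_{u_2'} \tfrac{1}{2} W_1(y_1|u_1\oplus u_2')\,W_2(y_2|u_2')$ and $W_2'(\tilde y_2,\tilde y_1|u_1) = \sum_{u_2'} \tfrac{1}{2} W_2(\tilde y_2|u_1\oplus u_2')\,W_1(\tilde y_1|u_2')$. Substituting $u_2' \mapsto u_1 \oplus u_2'$ (a bijection on $\{0,1\}$) in the second sum turns its $W_1$ factor into $W_1(\tilde y_1|u_1 \oplus u_2')$ and its $W_2$ factor into $W_2(\tilde y_2|u_2')$. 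The two expressions now agree up to swapping the two output coordinates, so the map $f(y_1,y_2)=(y_2,y_1)$ does the job. Note that this step does not use the symmetry hypothesis.

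For the second equivalence $W_1'' \sim W_2''$, a plain coordinate swap is insufficient, because $W_1''(y_1,y_2,u_1|u_2)=\tfrac{1}{2} W_1(y_1|u_1\oplus u_2)\,W_2(y_2|u_2)$ carries the ``$\oplus u_1$'' offset on the first channel, whereas in $W_2''(\tilde y_2,\tilde y_1,u_1|u_2)=\tfrac{1}{2} W_2(\tilde y_2|u_1\oplus u_2)\,W_1(\tilde y_1|u_2)$ the offset sits on the second channel, and no change of summation variable can move it (there is no sum left). This is where the symmetric B-DMC hypothesis enters: each $W_i$ admits an involutive output permutation $\pi_i$ with $W_i(y|0) = W_i(\pi_i(y)|1)$, which I would write compactly as $W_i(y|a \oplus b) = W_i(\pi_i^{a}(y)|b)$. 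Applying this identity once in each expression lets me absorb the $u_1$ offset into an output relabelling, and the two rewritten formulas then match under the map $f(y_1,y_2,u_1) = (\pi_2^{u_1}(y_2),\pi_1^{u_1}(y_1),u_1)$, whose inverse is itself because each $\pi_i$ is an involution.

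The main obstacle I expect is conceptual rather than computational: recognizing that channel symmetry is exactly the tool that converts an input twist ``$u_1 \oplus \cdot$'' on one side into an output relabelling on the other, thereby letting us permute the roles of $W_1$ and $W_2$. Once the identity $W_i(y|a\oplus b) = W_i(\pi_i^{a}(y)|b)$ is in hand, verifying that $f$ is a bijection and that $W_1''(\cdot|u_2) = W_2''(f(\cdot)|u_2)$ for every $u_2$ is a short, mechanical check. I would close by remarking that the first equivalence needs no symmetry assumption, while the second genuinely does, which justifies the hypothesis in the lemma.
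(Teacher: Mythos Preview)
Your proposal is correct and matches the paper's proof essentially line for line: the paper also uses the swap $f(y_1,y_2)=(y_2,y_1)$ for the first equivalence and the map $g(y_1,y_2,u_1)=(u_1\cdot y_2,\,u_1\cdot y_1,\,u_1)$ (their notation for your $(\pi_2^{u_1}(y_2),\pi_1^{u_1}(y_1),u_1)$) for the second, invoking the identity $W(y|a\oplus x)=W(a\cdot y|x)$ from Arikan. Your reading of ``$W_1''\sim W_2'$'' as a typo for $W_1''\sim W_2''$ is also correct.
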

\begin{proof}
To prove $W_1' \sim W_2'$, we use function $f : Y_1 \times Y_2 \rightarrow Y_2 \times Y_1$ to be $f(y_1, y_2) = (y_2, y_1)$. The equivalence can be verified by writing out the transition probabilities directly.

To prove $W_1'' \sim W_2''$, we use function $g : Y_1 \times Y_2 \times X \rightarrow Y_2 \times Y_1 \times X$ to be $g(y_1, y_2, u_1) = (u_1 \cdot y_2, u_1 \cdot y_1, u_1)$. Here, $u_1 \cdot y_2$ denotes $\pi_{u_1} (y_2)$ where $\pi_0$ is identity permutation and $\pi_1$ is the permutation such that (i) $\pi_1^{-1} = \pi_1$ and (ii) $W_2(y|1) = W_2(\pi_1(y)|0)$ which exists as the per the definition of a symmetric B-DMC. We use the same notation to denote the different permutations for alphabets $Y_1$ and $Y_2$. The function $g$ shows equivalence of $W_1''$ and $W_2''$ as follows:
\begin{eqnarray*}
W_1''(y_1, y_2, u_1 | u_2) & = &  \frac{1}{2} W_1(y_1|u_1 \oplus u_2) W_2(y_2| u_2) \\
 & = & \frac{1}{2} W_1(u_1 \cdot y_1| u_2) W_2(y_2| u_2) \\
 & = & \frac{1}{2} W_1(u_1 \cdot y_1| u_2) W_2(u_1 \cdot y_2| u_1 \oplus u_2) \\
 & = & W_2''(u_1 \cdot y_2, u_1 \cdot y_1, u_1| u_2)
\end{eqnarray*}
We note that some steps in the equation above use the result from \cite{Arikan} that $W(y|a\oplus x) = W(a \cdot y | x)$. Also, it can be checked that $f$ and $g$ are invertible functions. 
\end{proof}

\subsection{Lemma 2: polarization retains equivalence}
Next, we show that polarization of equivalent channels leads to equivalent polarized channels. 
\begin{lemma}\label{lem2}
Let $W_1$ and $W_2$ be two independent symmetric B-DMCs, and let $(W_1, W_2) \rightarrow (W', W'')$. Let $V_1$ and $V_2$ be two independent symmetric B-DMCs, and let $(V_1, V_2) \rightarrow (V', V'')$. If $V_1 \sim W_1$, and $V_2 \sim W_2$, then $V' \sim W'$ and $V'' \sim W''$.
\end{lemma}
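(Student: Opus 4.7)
The plan is to produce the invertible relabelings directly from the hypothesized equivalences and verify that they intertwine the polarization transforms. By the definition of $\sim$, the assumption $V_1 \sim W_1$ gives an invertible $f_1 : Y_{W_1} \to Y_{V_1}$ with $W_1(y\mid x) = V_1(f_1(y)\mid x)$, and analogously the assumption $V_2 \sim W_2$ gives an invertible $f_2 : Y_{W_2} \to Y_{V_2}$ with $W_2(y\mid x) = V_2(f_2(y)\mid x)$.

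For the minus-channel equivalence $W' \sim V'$, I would propose the product map $F(y_1, y_2) = (f_1(y_1), f_2(y_2))$, which is clearly invertible with inverse $(f_1^{-1}, f_2^{-1})$. Plugging into the definition of $W'$ and replacing each factor using the hypotheses gives
\begin{eqnarray*}
W'(y_1, y_2 \mid u_1) &=& \sum_{u_2'} \tfrac{1}{2}\, W_1(y_1 \mid u_1 \oplus u_2')\, W_2(y_2 \mid u_2') \\
&=& \sum_{u_2'} \tfrac{1}{2}\, V_1(f_1(y_1) \mid u_1 \oplus u_2')\, V_2(f_2(y_2) \mid u_2') \\
&=& V'(f_1(y_1), f_2(y_2) \mid u_1),
\end{eqnarray*}
which is exactly what is required.

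For the plus-channel equivalence $W'' \sim V''$, I would propose $G(y_1, y_2, u_1) = (f_1(y_1), f_2(y_2), u_1)$, which is invertible since $f_1$ and $f_2$ are and the $u_1$ coordinate is untouched. The same substitution into the definition of $W''$ gives
\begin{eqnarray*}
W''(y_1, y_2, u_1 \mid u_2) &=& \tfrac{1}{2}\, W_1(y_1 \mid u_1 \oplus u_2)\, W_2(y_2 \mid u_2) \\
&=& \tfrac{1}{2}\, V_1(f_1(y_1) \mid u_1 \oplus u_2)\, V_2(f_2(y_2) \mid u_2) \\
&=& V''(f_1(y_1), f_2(y_2), u_1 \mid u_2),
\end{eqnarray*}
establishing the second equivalence.

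There is no real obstacle here: the equivalence relation $\sim$ is essentially a relabeling of the output alphabet, and the polarization transform is built as a sum/product of per-channel transition probabilities, so it commutes with any such relabeling applied coordinate-wise. The only mild care needed is to record that $u_1$ is carried over unchanged in $G$ (so symmetry of the B-DMCs is not actually used in this lemma, unlike in Lemma 1), and to confirm invertibility of $F$ and $G$, both of which are immediate.
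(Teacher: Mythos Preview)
Your proof is correct and follows exactly the same approach as the paper: take the coordinate-wise relabelings $f_1,f_2$ from the hypotheses and apply them componentwise (leaving $u_1$ fixed in the plus-channel case) to obtain the required equivalences. The paper's proof is terser and omits the explicit verification you wrote out, but the construction of the functions is identical.
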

\begin{proof}
Let $f_{W_1V_1}$ and $f_{W_2V_2}$ be the functions so that $V_1(f_{W_1V_1}(y_1)|x) = W_1(y_1|x)$ and $V_2(f_{W_2V_2}(y_2)|x) = W_2(y_2|x)$. Selecting the following functions: 
$f_{W''V''}: (y_1, y_2, u_1)  =  (f_{W_1V_1}(y_1), f_{W_2V_2}(y_2), u_1)$ and 
$f_{W'V'}: (y_1, y_2) = (f_{W_1V_1}(y_1), f_{W_2V_2}(y_2))$ shows equivalence of the polarized channels.
\end{proof}

\subsection{Proof of Theorem 1}

We use induction on $\log_2{N}$ - variable is denoted by $m$. 

For $m = 1$ ($N = 2$), the only two non-trivial and equivalent subsets are $S = \{ 1 \}$ and $T = \{ 2 \}$. The polarization transform would involve $(W, \textit{E}) \rightarrow (W_{2,S}^{(1)}, W_{2,S}^{(2)})$ and $(\textit{E}, W) \rightarrow (W_{2,T}^{(1)}, W_{2,T}^{(2)})$.  The equivalence of these channels follows from \hyperref[lem:eq]{Lemma~\ref*{lem:eq}} by using $W_1 ~ \leftarrow ~ W$ and $W_2 ~ \leftarrow ~ \textit{E}$.

Now, we assume \hyperref[thm1]{Theorem~\ref*{thm1}} is true for $m = n$ (or $N$).
 
We prove \hyperref[thm1]{Theorem~\ref*{thm1}} for $m = n + 1$ (or $2*N$). Let $S$ and  $T$ be the two equivalent subsets for $m = n+1$. We know that $S = T~\underline{\oplus}~x^{n+1}$. Let $x_{n+1}x_n...x_1$ be the binary representation of $x^{n+1}-1$, and let $x_nx_{n-1}...x_1$ be the binary representation on $x^{n} - 1$. To reduce the problem to $m = n$, we define:
\begin{eqnarray*}
S_1 & = & \{ s | s \in S, s \leq N \},~~ S_2 ~ = ~ S \setminus S_1   \\
T_1 & = & \{ t | t \in T, t \leq N \},~~ T_2 ~ = ~ T \setminus T_1 
\end{eqnarray*} 
Depending on the value of MSB of $x^{n+1} - 1$, $x_{n+1}$, and using the induction assumption, we have:
\begin{eqnarray*}
x_{n+1} = 0 ~ \Rightarrow ~ S_1 ~ = ~ T_1~\underline{\oplus}~x^{n}, & S_2 ~ = ~ T_2~\underline{\oplus}~x^{n}, \\
 \Rightarrow  W_{N,S_1}^{(i)}  \sim  W_{N,T_1}^{(i)}, & W_{N,S_2}^{(i)} \sim  W_{N,T_2}^{(i)}  \\
x_{n+1} = 1 ~\Rightarrow ~ S_1 ~ = ~ T_2  ~\underline{\oplus}~x^{n}, &  S_2 ~ = ~ T_1 ~\underline{\oplus}~x^{n}, \\
\Rightarrow  W_{N,S_1}^{(i)}  \sim  W_{N,T_2}^{(i)}, & W_{N,S_2}^{(i)} \sim  W_{N,T_1}^{(i)} 
\end{eqnarray*}
We next use \hyperref[lem:eq]{Lemma~\ref*{lem:eq}} and \hyperref[lem2]{\ref*{lem2}} with the following parameters:
\begin{eqnarray*}
u_1 ~ \leftarrow ~ u_{2i-1} && u_2 \leftarrow ~ u_{2i}; \\
W_1 ~ \leftarrow ~ W_{N,S_1}^{(i)}  && W_2 ~ \leftarrow ~ W_{N,S_2}^{(i)}; \\
W_1' ~ \leftarrow ~ W_{2N,S}^{(2*i-1)}  && W_1'' ~ \leftarrow ~ W_{2N,S}^{(2*i)}; \\
y_1 ~ \leftarrow ~ (Y_{S_1}, u_{1,o}^{2i-2} \oplus u_{1,e}^{2i-2}) && y_2 ~ \leftarrow ~ (Y_{S_2}, u_{1,e}^{2i-2}); \\
V_1 ~ \leftarrow ~ W_{N,T_1}^{(i)}  && V_2 ~ \leftarrow ~ W_{N,T_2}^{(i)}; \\
V_1' ~ \leftarrow ~ W_{2N,T}^{(2*i-1)}  && V_1'' ~ \leftarrow ~ W_{2N,T}^{(2*i)}; \\
y_1 ~ \leftarrow ~ (Y_{T_1}, u_{1,o}^{2i-2} \oplus u_{1,e}^{2i-2}) && y_2 ~ \leftarrow ~ (Y_{T_2}, u_{1,e}^{2i-2}).
\end{eqnarray*}
To complete the proof, we note that $(W_{N,S_1}^{(i)}, W_{N,S_2}^{(i)}) \rightarrow (W_{2N,S}^{(2*i-1)}, W_{2N,S}^{(2*i)})$, and $(W_{N,T_1}^{(i)}, W_{N,T_2}^{(i)}) \rightarrow (W_{2N,T}^{(2*i-1)}, W_{2N,T}^{(2*i)})$. Further, as per \hyperref[lem:eq]{Lemma~\ref*{lem:eq}}, the order of parameters does not matter for polarization, and as per the induction step either (i) $W_{N,S_1}^{(i)}  \sim  W_{N,T_1}^{(i)}, ~ W_{N,S_2}^{(i)} \sim  W_{N,T_2}^{(i)}$ or (ii) $W_{N,S_1}^{(i)}  \sim  W_{N,T_2}^{(i)},~ W_{N,S_2}^{(i)} \sim  W_{N,T_1}^{(i)}$.Therefore, using \hyperref[lem2]{Lemma~\ref*{lem2}} we conclude that  $W_{2N,S}^{(2*i-1)} \sim  W_{2N,T}^{(2*i-1)}$ and $W_{2N,S}^{(2*i)} \sim  W_{2N,T}^{(2*i)}$.

\end{document}